\newtheorem{theorem}{Theorem}[section]
\newtheorem{proof}{Proof}[section]
\newtheorem{lemma}{Lemma}[section]
\begin{document}
\title{Modified Covariance Intersection for Data Fusion in Distributed Non-homogeneous Monitoring Systems Network}

\author{Abolghasem Daeichian}
\email{ E-mail:a-daeichian@araku.ac.ir, a.daeichian@gmail.com}
\thanks{This is the peer reviewed version of the following article: Daeichian, Abolghasem, and Elham Honarvar. "Modified covariance intersection for data fusion in distributed nonhomogeneous monitoring systems network." International Journal of Robust and Nonlinear Control 28.4 (2018): 1413-1424., which has been published in final form at DOI: 10.1002/rnc.3964. This article may be used for non-commercial purposes in accordance with Wiley Terms and Conditions for Use of Self-Archived Versions.}
\affiliation{Department of Electrical Engineering, Faculty of Engineering, Arak University, Arak, 38156-8-8349 Iran}
\author{Elham Honarvar}
\affiliation{Payam Nonprofit Higher Education
	Institute, Golpayegan, Iran}

\begin{abstract}
Monitoring networks contain monitoring nodes which observe an area of interest to detect any possible existing object and estimate its states. Each node has characteristics such as probability of detection and clutter density which may have different values for distinct nodes in non-homogeneous monitoring networks. This paper proposes a modified covariance intersection method for data fusion in such networks. It is derived by formulating a mixed game model between neighbor monitoring nodes as players and considering inverse of the trace of fused covariance matrix as players' utility function. Monitoring nodes estimate the states of any possible existing object by applying joint target detection and tracking filter on their own observations. Processing nodes fuse the estimated states received from neighbor monitoring nodes by the proposed modified covariance intersection. It is validated by simulating target detection and tracking problem in two situations: one-target and unknown number of targets.
\end{abstract}
\keywords{data fusion, covariance intersection, target detection and tracking, game theory.}
\maketitle
\section{Introduction}
Precision promotion and cost reduction for detection and tracking of moving objects (targets) in a noisy environment is a challenging problem. Nowadays, distributed monitoring networks are extensively used in target detection and tracking in both civilian and military applications due to providing scalability, flexibility, low operation cost, and diversity in viewing geometry and phenomenology \cite{dong2014distributed,bocca2014multiple}. Monitoring networks consist of monitoring nodes, processing nodes (fusion center), and communication links. Monitoring nodes contain sensors which constantly measure some features in the interested space. The measurements are used to detect and track any possible existing object which results in an estimation of its states. Processing nodes fuse estimations of two or more monitoring nodes which are received through communication links in order to give more precise information.
Monitoring nodes may connect to processing nodes in different topologies; for example, centralized topology which is theoretically optimal but has some disadvantages such as high bandwidth reqirement for collecting all measurements in a single node, high computation load at a single location, high power consumption, and low robustness due to a single point of failure. Decentralized or distributed data fusion is an alternative topology which consists of multi local processing nodes \cite{ding2014recent}.

Object tracking consists of detecting and then tracking an object \cite{ulker2012multiple,morelande2007bayesian}. A unified target detection and tracking has been introduced as joint target detection and tracking (JoTT) filter to track a target which has Gaussian \cite{bazzazzadeh2009optimal} or Bernoulli \cite{mahler2007statistical,vo2014labeled} characteristics.
Exact computing of probability density is a drawback of this method. Monte-Carlo or particle filter techniques tackle this disadvantage due to their capability of approximating a probability density efficiently by a cloud of $N$ weighted particles \cite{olsson2011rao,han2013particle,li2015multiple}. Rao-Blackwellized particle filtering (RBPF) is an extension on particle filter that not only gives data associations and estimation of target states, but also model births and deaths of the targets as hidden stochastic processes observed through the measurements \cite{sarkka2007rao}.
In the situation of unknown number of targets, several factors such as number of targets, corrupted measurements by clutter, and target appearance-disappearance from time to time must be considered \cite{ulker2012multiple}. A difficulty of unknown-target joint detection and tracking is due to the fact that number of targets and measurements both vary randomly in time. Thus, it is not clear that each measurement is generated by which target \cite{vo2004joint}. Some researches such as \cite{oh2004markov,oh2009markov} cope with this conundrum.

Distributed data fusion (DDF) and filtering algorithms are the key components of any target tracking system \cite{durrant2005data}. Data fusion techniques merge data from multiple sensors and related information to provide more accurate and applicable data in comparison with using a single, independent sensor \cite{waltz1990multisensor,khaleghi2013multisensor}. This allows either improved accuracy from existing sensors or the same performance from smaller or cheaper sensors. Improving the observability and developing the observation space are another advantages of multi distributed sensor data fusion \cite{ahmed2013bayesian}. Unknown degree of correlation between estimates obtained from different monitoring nodes and double counting of the information that is previously used (redundant information) within an ad-hoc network topology are some challenges in practical implementation of DDF. Covariance Intersection (CI) presents suboptimal data fusion algorithm which addresses these challenges and avoids the assumption of independence of estimates required by traditional Bayesian filters Data fusion \cite{chen2002estimation,liggins2008handbook}. CI conducts fusion process by weighting the estimates of any monitoring node using a mixing parameter $\omega$. Usually, mixing parameter is found such that minimizing the trace (or determinant) of the fused covariance matrix which is equivalent to minimizing the Shannon entropy of the fused covariance matrix \cite{hurley2002information}. An iterative extension of the CI for DDF is given in \cite{hlinka2014distributed} which converges asymptotically to a consensus across all network nodes. A batch CI scheme is presented in \cite{sun2016distributed} to handle the unknown cross-correlation in DDF by means of an average consensus algorithm.

In many applications, monitoring nodes have different characteristics, capabilities, or with different levels of trust, i.e. non-homogeneous nodes. For instance, Network Centric Warfare (NCW) requires a sensor network which collects and fuses vast amount of disparate and complementary data from different non-homogeneous sensors that are geographically dispersed throughout the battle space \cite{julier2006challenge,smith2006approaches}. Some investigations try to deal with this challenge and keep away from the assumption of homogeneity of sensors. For example, a method for optimizing the mixing parameter to minimize information lost during fusion in a network with different measurement covariance matrix for each sensor is presented in \cite{clarke2016minimum}. In particular, optimality of fusion rules implemented in detection and tracking systems usually relies on probability of detection, clutter density and knowledge of probability distributions for all distributed sensors \cite{aziz2014new,smith2006approaches}.

This paper proposes a modified CI for data fusion in a non-homogeneous distributed monitoring network. Difference between detection parameters of distinct nodes such as probability of detection take into account in data fusion algorithm. To this end, a mixed game is articulated between two distinct monitoring nodes $a$ and $b$ as players and inverse of trace of fused covariance matrix as their utility function. Matching the utility function of the game with traditional CI leads to a modified CI which the accuracy of its fused estimates outperforms the traditional CI in the sense of fused covariance matrix. The modifies CI is also extended for more than two nodes. Detection and tracking in two circumstances (one target and unknown number of targets) in a cluttered environment are investigated to validate the proposed method. An interested space is observed by non-homogeneous monitoring nodes. The RBPF is employed to estimate target states in each node. The estimated states are transmitted to neighbor processing nodes via communication links. The received estimates are fused by the proposed modified-CI in processing nodes. The simulation results are compared with Kalman filter, particle filter, and traditional CI to verify utility of the proposed method.

The paper is folded as follows. Problem is stated in section \ref{section-ProblemStatement}. RBPF and RBMCDA algorithms for target detection and tracking are reviewed in section \ref{section-RBPF}. Section \ref{section-modCI} is devoted to representation of the proposed method. Simulation results are given in section \ref{section-Simulation}. Finally, the paper is concluded in section \ref{section-Conclusion}.

\section{Problem  statement}
\label{section-ProblemStatement}
Let a network of distributed and non-homogeneous monitoring nodes observe an area of interest to find targets with states $x_k\in \mathbb{R}^n$ at time $t_k$; e.g. see Fig.\ref{Network}. Each monitoring node has its own probability of detection, density of clutter, and $n_y$ measurements of the target state $x_k$, in vector $y_k\in \mathbb{R}^{n_y}$ at time $t_k$. They estimate mean ($m_k$) and covariance ($P_k$) of the states of any observed object using joint target detection and tracking filter. 

The estimates are sent to neighbor processing nodes through communication links. Processing nodes fuse received data to yield more accurate estimation. Different values for detection parameters such as probability of detection for each monitoring node may corrupt the outcome of fusion process. This paper concerns on how data fusion could be performed by considering non-homogeneity of monitoring node parameters, especially in detection parameters.

\begin{figure}[htb]
\begin{center}
\includegraphics[width=0.4\textwidth]{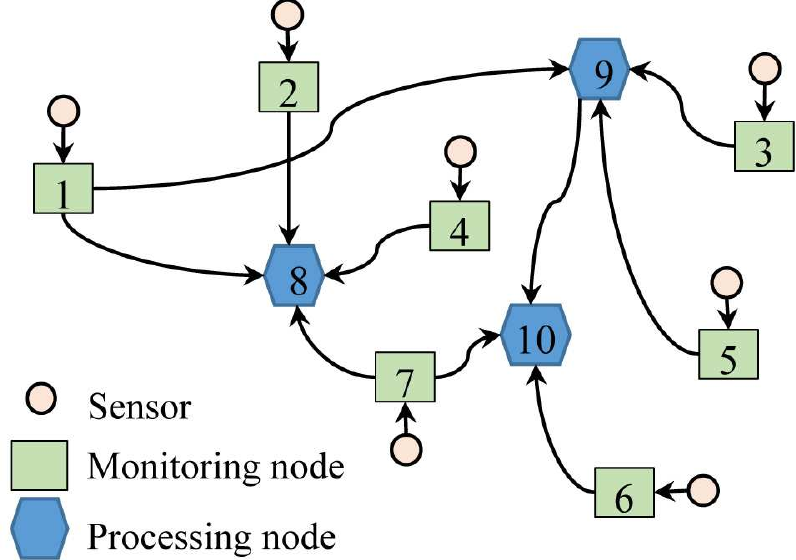}
\caption{Network of monitoring and processing nodes}
\label{Network}
\end{center}
\end{figure}

Consider the dynamic of target states as:
\begin{eqnarray}\label{EqSystem}
x_{k} &=& A_{k-1}x_{k-1}+ q_{k-1}
\end{eqnarray}
and sensor measurement model as:
\begin{eqnarray}\label{EqMeasurement}
y_k &=& H_k x_k+r_k
\end{eqnarray}
where $q_{k-1}$ and $r_k$ are states and measurements noise, respectively. This kind of noise suppose to be finite second moment stochastic processes with covariance matrices $Q_{k-1}$ and $R_k$, respectively. Let $c_k$ be data association indicator where $c_k=0$ represents the clutter and $c_k=j$ indicates target number $j$.
Clutter is supposed to distribute in the measurement space of volume $V$ uniformly. Therefore, the probability of observing $y_k$ conditioned on clutter is
\begin{eqnarray}\label{EqClutter}
\mathrm{P}(y_k|c_k=0) &=& 1/V
\end{eqnarray}
This paper employs RBPF for target detection and tracking in monitoring nodes and proposes a modified covariance intersection for data fusion in processing nodes.

\section{Rao-blackwellized particle filtering}
\label{section-RBPF}
Rao-blackwellized particle filtering (RBPF) is used for target tracking in a cluttered environment \cite{sarkka2007rao}. The RBPF divides tracking problem into data association problem and tracking single target problem. Data association and tracking are done by sequential importance resampling (SIR) and Kalman filter, respectively. This section briefly reviews the RBPF from \cite{sarkka2007rao,sarkka2004rao} where one or unknown number of targets are in the area of interest.

\subsection{One target scenario}
In this situation, suppose that only one target and clutter exist in the observed area. Consider a set of $N$ particles, where each particle $i$ at time step $k$ contains the following components:
\begin{eqnarray}
\{c^i_k, m^i_k, P^i_k, w^i_k\}
\end{eqnarray}
where $c^i_k,\ m^i_k,P^i_k$ and $w^i_k$ are data association indicator, mean and covariance of target states, and importance weight of the particle, respectively. The RBPF algorithm for one target with dynamic model as Eq.\ref{EqSystem} is:
\begin{enumerate}
\item
    The first step is to predict the states of target by the Kalman filter prediction step for each particle separately:
    \begin{eqnarray}
    \left[m^{-i}_k,P^{-i}_k\right]={\mathrm{KF}}_p\left(m^i_{k-1},P^i_{k-1},A_{k-1},Q_{k-1}\right)
    \end{eqnarray}
    where $m^{-i}_k$ and $P^{-i}_k$ denotes the prior mean and covariance, respectively. ${\mathrm{KF}}_p$ refers to the prediction step of Kalman filter, i.e. $m^{-i}_k=A_{k-1}m^i_{k-1}$ and $P^{-i}_k=A_{k-1}P^i_{k-1}A^T_{k-1}+Q_{k-1}$.
\item
    Now, consider a new measurement $y_k$ by the measurement model as Eq.\ref{EqMeasurement} is obtained. This step is to sample new association from the optimal importance distribution, that is to say sample a new association from the optimal importance distribution: $c^i_k\sim\mathrm{P}\left(c^i_k|y_{1:k},c^i_{1:k-1}\right)$, where the indices $1:k$ means from time step $1$ to $k$ . Assume that data association at time step $k$ is independent of the previous measurements $y_{1:k-1}$ and depends only on $m$ previous associations $c^i_{k-m:k-1}$.
    Thus, by using Bayes' rule, the optimal importance distribution is computed by calculating the likelihood of measurements for not only the target ($c^i_k=1$) but also the clutter ($c^i_k=0$) as follow:
    \begin{eqnarray}
    \mathrm{P}\left(c^i_k|y_{1:k},c^i_{1:k-1}\right) &\propto& \mathrm{P}\left(y_k|c^i_k,y_{1:k-1},c^i_{1:k-1}\right)\mathrm{P}\left(c^i_k|c^i_{k-m:k-1}\right)
    \end{eqnarray}
    where the likelihood of measurement is:
    \begin{eqnarray}
    \mathrm{P}\left(y_k|c^i_k,y_{1:k-1},c^i_{1:k-1}\right)  &=&
    \left\{
        \begin{array}{ll}
            \frac{1}{V} & \hbox{\rm{if} $c^i_k=0$} \\
            \rm{KF}_{lh}\left(y_k,m^{-i}_k, p^{-i}_k, H_k, R_k\right) & \hbox{\rm{if} $c^i_k=1$}
        \end{array}
    \right.
    \end{eqnarray}
    and ${\rm KF}_{lh}=P(y_k|y_{1:k-1})$ denotes the Kalman filter measurement likelihood.

    The new association $c^i_k=j$ is drawn with the probability $\pi^i_j$ as:
    \begin{eqnarray}
    \pi^i_j &=& \frac{\mathrm{P}(y_k|c^i_k=j,y_{1:k-1},c^i_{1:k-1})\mathrm{P}\left(c^i_k=j|c^i_{k-m:k-1}\right)}{\sum_{j'=0}^{1}\mathrm{P}(y_k|c^i_k=j',y_{1:k-1},c^i_{1:k-1})\mathrm{P}\left(c^i_k=j'|c^i_{k-m:k-1}\right)}
    \end{eqnarray}
\item
    The weight of each particle is updated by:
    \begin{eqnarray}
    w^i_k & = & w^{i}_{k-1} \frac{\mathrm{P}\left(y_k|c_k^i,y_{1:k-1},c^i_{1:k-1}\right)\mathrm{P}\left(c^i_k|c^i_{k-m:k-1}\right)}
    {\mathrm{P}\left(c^i_k|y_{1:k},c^i_{1:k-1}\right)}
    \end{eqnarray}
\item
    Finally, kalman filter update step is performed for each particle:
    \begin{eqnarray}
    \left[m^{i}_k,P^{i}_k\right]={\mathrm{KF}}_u\left(m^{-i}_{k},P^{-i}_{k},y_k,H_{k},R_{k}\right)
    \end{eqnarray}
    which means $m^i_k = m^{-i}_k+K^i_k V^i_k$ and $P^i_k = P^{-i}_k-K^i_kS^i_k{\left[K^i_k\right]}^T$ where $V^i_k=y_k-H_km^{-i}_k$, $S^i_k=H_kP^{-i}_k H^T_k+R_k$, and $K^i_k=P^{-i}_k H^T_kS^{-1}_k$.
    \end{enumerate}

It worth to note that, resampling is used to remove particles with very low weights and duplicate the particles with high weights. The resampling is applied based on the effective number of particles, which is estimated from the variance of the particle weights \cite{sarkka2007rao}.

\subsection{Unknown number of targets}
In this scenario, there are unknown number of targets with clutter in the observed area. Suppose that one target may born, some targets may die, or no birth-death happens in each time step.
The problem of tracking unknown number of targets by Rao-Blackwellized Monte Carlo Data Association (RBMCDA) method is divided into three sub problems: estimating the number of targets, data association, and tracking single target \cite{sarkka2007rao,sarkka2004rao}.

Suppose target dynamic, measurements model, and clutter distribution as Eqs.\ref{EqSystem}, \ref{EqMeasurement}, and \ref{EqClutter}, respectively. Possible events between two measurements $y_{k-1}$ and $y_k$ are a target birth, one or more targets death, and no target death. The measurement $y_k$ may be associated to clutter, one of the existing targets or a newborn target. Also, association priors are known and may be modeled as an $m$-th order Markov chain $\mathrm{P}(c_k|c_{k-m:k-1},T_{k-m:k-1})$ where $T_{k-m:k-1}$ contains number of targets at time steps $k-m$ to $k-1$. The RBMCDA algorithm is:
\begin{enumerate}
\item
    Suppose target birth happen with the probability $p_b$ when a new measurement is obtained and $b_k$ is the birth event indicator.
    So, There are three cases with different probabilities for data association and birth event as follow:
    \begin{itemize}
      \item Case 1: A target is born and the measurement is associated with the newborn target, that is to say $b_k=birth, c_k=T_{k-1}+1$
      \item Case 2: A target is not born and the measurement is associated with one of the existing targets or with clutter, that is to say $b_k=no\ birth$ and $c_k=j$ where $j\in 0,\cdots,T_k$. Markov model for data association in the case of no birth is $\mathrm{P}(c_k|no\ birth) = \mathrm{P}(c_k|c_{k-m:k-1})$.
      \item Case 3: Other events have zero probability
    \end{itemize}
    Thus, given the associations $c_{k-m:k-1}$ on the $m$ previous steps, the joint distributions of the event $b_k\in\{no\ birth,birth\}$ and the association $c_k$ is given by:
    \begin{eqnarray}
      \mathrm{P}(b_k,c_k|c_{k-m:k-1})&=&
       \left\{
                                         \begin{array}{ll}
                                           p_b & \hbox{case 1} \\
                                           (1-p_b)\mathrm{P}(c_k|c_{k-m:k-1}) & \hbox{case 2} \\
                                           0 & \hbox{case 3}
                                         \end{array}
                                       \right.
    \end{eqnarray}
\item
    The purpose of the death model is only to remove the targets with which no measurements have been associated for a long time. Death events are independent of measurements. After associating a measurement with a target, life time $t_d$ of the target has a probability density $\mathrm{P}(t_d)$ which usually is gamma distribution. If the last association with target $j$ was at the time $\tau_{k,j}$ and there is a sampled hypothesis that the target is alive on the previous time step $t_{k-1}$, then the  probability of target death at current time step $t_k$ is
    \begin{eqnarray} \label{death_model}
      \mathrm{P}(\mathrm{death\ of\ j}|t_k,t_{k-1},\tau_{k,j})&=&
      \mathrm{P}(t_d\in[t_{k-1}-\tau_{k,j},t_k-\tau_{k,j}]|t_d \geq t_{k-1}-\tau_{k,j})
    \end{eqnarray} 
\item
    Now, the RBMCDA with an unknown number of targets fits to the RBPF framework for a joint state $X_k$ which contains the states of the $T_k$ targets $X_k = \left[x_{k,1}, \cdots, x_{k,T_k}\right]^T$. The implementation idea is to assume that always a (very large) constant number of targets $T\infty$ exists. But an unknown, varying number of them are visible (or alive), and they are the ones we are tracking. The visibility indicator $e_k$ represents the visibility of targets. When a target birth occurs, a new item is set in $e_k$, and the corresponding target prior distribution is updated by the measurement. When a target dies, its distribution again becomes the prior and its state is moved to the end of $X_k$ and $e_k$.

    Here, the RBPF consists of a set of $N$ particles where each particle $i$ at time step $k$ contains the following components:
    \begin{eqnarray}
    \{c_{k-m+1:k}^i, e_k^i, m_{k,1:T_k}^i, P_{k,1:T_k}^i, w_k^i\}
    \end{eqnarray}
    where $c_{k-m+1:k}^i$, $e_k^i$, $m_{k,1:T_k}^i$, $P_{k,1:T_k}^i$ and $w_k^i$ are data association indicator, target visibility indicator, mean and covariance of the target, and importance weight of the $i$-th particle, respectively. The indices $1:k$ means for targets $1$ to $k$. The following information is also stored for each particle:
    \begin{eqnarray}
    \{T^i_k, \tau^i_{k,j}, \textrm{id}^i_{k,j}\}
    \end{eqnarray}
    where $T^i_k$, $\tau^i_{k,j}$, and $\textrm{id}^i_{k,j}$ are the number of targets, the time of the last measurement associated with target $j$, and a unique integer valued identifier in all particles, which is assigned at the birth of the target.
    The RBPF algorithm is now applied to each targets as:
    \begin{enumerate}
    \item
        The Kalman filter prediction step is applied on each target in each particle separately, due to independence of targets.
    \item
        Distribution $\mathrm{P}(e_k|e_{k-1})$ defines the dynamic of birth and deaths. Now, the data association model is of the form $\mathrm{P}(c_k|c_{k-m:k-1},e_k)$. So, the joint Markov chain model for the indicators is:
        \begin{eqnarray}
          \mathrm{P}(e_k,c_k|c_{k-m:k-1},e_{k-m:k-1}) &=& \mathrm{P}(c_k|c_{k-m:k-1},e_k )\mathrm{P}(e_k|e_{k-1})
        \end{eqnarray}
        The new association $c^i_k=j$ is drawn with the probability $\pi^i_j$ as:
    \begin{eqnarray}
    \pi^i_j &=& \frac{\mathrm{P}(y_k|e_k,c^i_k=j,y_{1:k-1},c^i_{1:k-1})\mathrm{P}\left(e_k,c^i_k=j|c^i_{k-m:k-1},e_{k-m:k-1}\right)}
         {\sum_{j'=0}^{T^i_k}\mathrm{P}(y_k|e_k,c^i_k=j',y_{1:k-1},c^i_{1:k-1})\mathrm{P}\left(e_k,c^i_k=j'|c^i_{k-m:k-1},e_{k-m:k-1}\right)}\nonumber
    \end{eqnarray} 
        
    \item
        The weight of each particle is updated similar to RBPF.
    \item
        The measurement update is also performed for each target separately.
    \end{enumerate}
\end{enumerate}
The assumptions which considered in derivation of RBMCDA are not very restrictive in practice. For example, the assumption that targets are independent from each other is a reasonable assumption in reality. The assumption that possible events between two measurements $y_{k-1}$ and $y_k$ are a target birth, one or more targets death, and no target death restricts our problem to only one target possibly born in each time step which is not very restrictive in practice. These assumptions are quite much the same as typical ones in tracking literature \cite{bar1995multitarget,sarkka2004rao}.

\section{Modified-CI for data fusion}
\label{section-modCI}
Traditional CI is a convex combination of mean and covariance of estimates received from monitoring nodes and gives more accurate estimations. Suppose that there are two pieces of information $A$ and $B$ which are received from different sources. The only available information is estimation of the means, covariance and the cross-correlations between them which are $\{m^a,P^a\}$ and $\{m^b,P^b\}$ and $P^{ab}$. If the estimations are independent (that is to say $P^{ab}=0$), then the conventional Kalman is an optimal filter. If $P^{ab}\ne 0$ is known then the Kalman filter with colored noise may be the best option but if $P^{ab}$ is unknown then CI is a consistent choice. CI fuses two pieces of information into $\{m^c,P^c\}$ as \cite{chen2002estimation,liggins2008handbook}:
\begin{eqnarray}\label{EqCI}
m^c &=& P^c\left(\omega (P^a)^{-1}m^a+(1-\omega)(P^b)^{-1}m^b\right) \nonumber\\
(P^c)^{-1} &=& \omega (P^a)^{-1}+(1-\omega)(P^b)^{-1}
\end{eqnarray}
where $\omega$ is computed to minimize a selected norm such as trace or determinant of $P^c$. While this minimization requires numeric solution of a convex optimization problem (see "fmincon" in Matlab optimization toolbox) for higher dimensions, closed-form solution is given for lower dimensions in \cite{reinhardt2012closed}. It is proved in \cite{hall2001multisensor} that CI yields a consistent estimate for any value of $P^{ab}$, i.e. $P^c-\bar{P}^c\geq0$ where $\bar{P}^c$ is true covariance matrix. The only constraint that is imposed on the assumed estimates is consistency. In other words, $P^a-\bar{P}^a\geq0$ and $P^b-\bar{P}^b\geq0$ where $\bar{P}^a$ and $\bar{P}^b$ are true covariance matrices.

Now, back to the problem stated in section \ref{section-ProblemStatement}. In a distributed network, RBPF is implemented in each monitoring node to detect and estimate the states of possible targets using data measured by its own sensors. Then, the estimations are fused together in processing nodes, see Fig.\ref{DataFusion}. Some researches aim to reach consensus over the network \cite{hlinka2014distributed}. But, this paper concerns on taking detection parameters into account in fusion algorithm. 
\begin{figure}[htb]
	\begin{center}
		\includegraphics[width=0.9\textwidth]{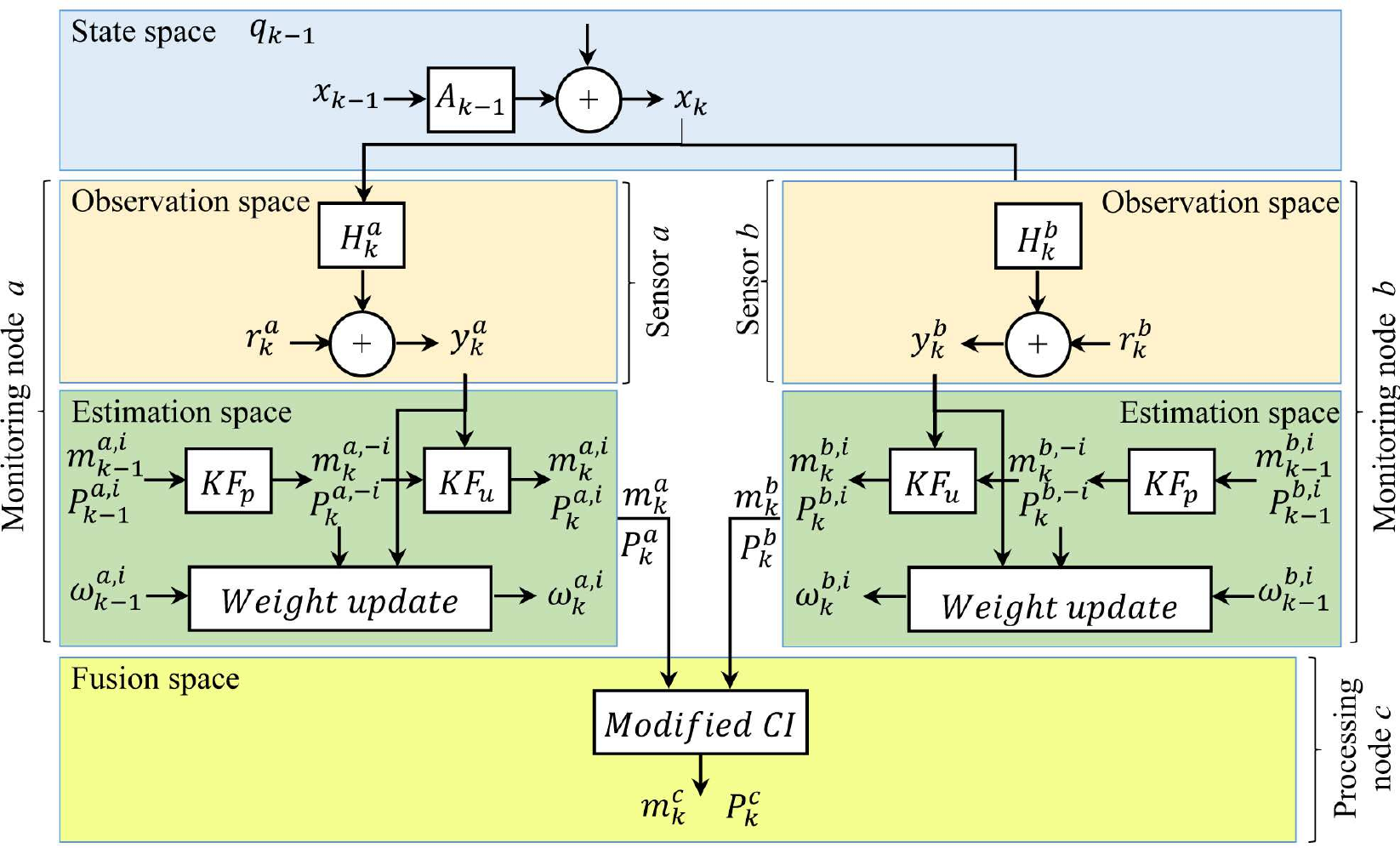}
		\caption{Fusion space}
		\label{DataFusion}
	\end{center}
\end{figure}

The following theorem proposes a modified-CI data fusion method for two nodes $a$ and $b$ considering different probability of detection for each node. It modifies the traditional CI by replacing the covariance matrices $P^a$ and $P^b$ with $\widehat{P}^a$ and $\widehat{P}^b$. The modified covariance matrices $\widehat{P}^a$ and $\widehat{P}^b$ depend on covariance matrices of estimates ($P^a$ and $P^b$), detection probability of both nodes, and parameters $\chi_a$ and $\chi_b$ which are related to the growth of covariance matrices in miss-detection time intervals. If there is no detection in time, then the accuracy of estimates decreases. So, the covariance of estimations are expanded with the growth parameter $\chi^{-1}\geq1$.

\begin{theorem}
    Let, two monitoring nodes $a$ and $b$ with probability of detection $p$ and $q$ estimate mean and covariance $\{m^a,P^a\}$ and $\{m^b,P^b\}$ for any observed target, respectively. Then, the following data fusion is consistent for any $P^{ab}$:
    \begin{eqnarray}
    \widehat{m}^c &=& \widehat{P}^c\left(\omega(\widehat{P}^a)^{-1}m^a+(1-\omega)(\widehat{P}^b)^{-1}m^b\right)\nonumber \\
    (\widehat{P}^c)^{-1} &=&\omega(\widehat{P}^a)^{-1}+(1-\omega)(\widehat{P}^b)^{-1}
    \end{eqnarray}
    where $(\widehat{P}^a)^{-1} = \{p+(1-p)\chi_a(q+(1-q)\chi_b)\}(P^a)^{-1}$, $(\widehat{P}^b)^{-1} = \{q+(1-q)\chi_b(p+(1-p)\chi_a)\}(P^b)^{-1}$, and $\chi_a^{-1}$, $\chi_b^{-1}$ express the growth of covariance in miss-detection intervals. $\omega$ is selected to minimize the trace of $\widehat{P}^c$.
\end{theorem}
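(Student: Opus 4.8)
The plan is to reduce the assertion to the classical consistency guarantee of ordinary CI, which is already available as the cited result \cite{hall2001multisensor}. The key observation is that the proposed fusion equations are algebraically identical to the traditional CI rule of Eq.\ref{EqCI}, except that the reported covariances $P^a$ and $P^b$ are replaced by the inflated covariances $\widehat{P}^a$ and $\widehat{P}^b$. Hence, if I can show that the modified estimates $\{m^a,\widehat{P}^a\}$ and $\{m^b,\widehat{P}^b\}$ are themselves consistent, i.e. $\widehat{P}^a-\bar{P}^a\geq0$ and $\widehat{P}^b-\bar{P}^b\geq0$, then the cited theorem applies verbatim with these as its inputs and delivers $\widehat{P}^c-\bar{P}^c\geq0$ for every unknown cross-correlation $P^{ab}$, which is exactly the claim.

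First I would establish that the scalar prefactors lie in $(0,1]$. Writing the modifications as $\widehat{P}^a=\alpha^{-1}P^a$ and $\widehat{P}^b=\beta^{-1}P^b$ with $\alpha=p+(1-p)\chi_a(q+(1-q)\chi_b)$ and $\beta=q+(1-q)\chi_b(p+(1-p)\chi_a)$, I would invoke $p,q\in[0,1]$ together with $\chi_a,\chi_b\in(0,1]$ (the latter because the growth factors obey $\chi^{-1}\geq1$). The inner term $g=q+(1-q)\chi_b$ is a convex combination of $1$ and $\chi_b$, so $g\in[\chi_b,1]\subseteq(0,1]$; then $\chi_a g\in(0,1]$, and $\alpha=p+(1-p)(\chi_a g)$ is itself a convex combination of $1$ and $\chi_a g$, giving $\alpha\in(0,1]$. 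The symmetric computation yields $\beta\in(0,1]$.

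With $\alpha,\beta\in(0,1]$ secured, the monotonicity step is routine: $\alpha^{-1}\geq1$ and $P^a\geq0$ give $(\alpha^{-1}-1)P^a\geq0$, i.e. $\widehat{P}^a=\alpha^{-1}P^a\geq P^a$. Combining with the assumed input consistency $P^a-\bar{P}^a\geq0$ and transitivity of the positive-semidefinite order yields $\widehat{P}^a\geq P^a\geq\bar{P}^a$, and symmetrically $\widehat{P}^b\geq\bar{P}^b$. Thus scaling the covariances by factors no smaller than one preserves (indeed strengthens) consistency, so the hypotheses of the cited CI result are met by $\{m^a,\widehat{P}^a\}$ and $\{m^b,\widehat{P}^b\}$, and the consistency of the fused pair $\{\widehat{m}^c,\widehat{P}^c\}$ follows directly since the mean update shares the same convex weighting.

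I expect the only load-bearing point, rather than a deep obstacle, to be verifying the scalar bound $\alpha,\beta\leq1$, because everything downstream (the inflation inequality, the transitivity chain, and the appeal to \cite{hall2001multisensor}) follows mechanically once that bound is in hand. It is worth noting that the choice of $\omega$ minimizing $\mathrm{tr}(\widehat{P}^c)$ plays no role in consistency: the reduction holds for every $\omega\in[0,1]$, so the stated minimization merely selects the most informative consistent fusion and is not needed to establish the theorem.
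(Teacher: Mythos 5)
Your proof is correct and follows essentially the same route as the paper: the paper's own consistency argument likewise observes that the scalar factors $\{p+(1-p)\chi_a(q+(1-q)\chi_b)\}$ and $\{q+(1-q)\chi_b(p+(1-p)\chi_a)\}$ are at most one, deduces $\widehat{P}^a\geq P^a\geq\bar{P}^a$ and $\widehat{P}^b\geq P^b\geq\bar{P}^b$, and then appeals verbatim to the CI consistency proof of \cite[Appendix 12.A]{hall2001multisensor}, with your convex-combination verification of $\alpha,\beta\in(0,1]$ simply filling in a bound the paper asserts without detail. The paper's proof additionally contains a game-theoretic expected-payoff computation, but that part only derives and motivates the definitions of $\widehat{P}^a$ and $\widehat{P}^b$ (which the theorem statement already supplies), so omitting it, and noting that the choice of $\omega$ is irrelevant to consistency, loses nothing from the proof of the stated claim.
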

\begin{proof}
    We articulate a mixed game between two nodes $a$ and $b$ as players. Each player has two action: detect or miss the targets. Player $1$ and $2$ detect with probabilities $p$ and $q$ and miss with $(1-p)$ and $(1-q)$, respectively, see table \ref{game}. Since, the aim of target tracking is to minimize the trace of covariance matrix, So the trace of $(P^c)^{-1}$ is considered as payoff function. The idea behind mixed games is that the solution gives a probability distributions which players select their actions randomly according to that \cite{hespanha2011introductory}. But, here the probabilities of selecting an action (i.e. probability of detection) is given. So, we have to manipulate the payoff function such that the probabilities $p$ and $q$ result to maximum payoff. Table \ref{game} shows payoff matrix of a 2-player mixed game which the payoff of player $1$ in only written.

    \begin {table}[htb]
    \caption{Payoff matrix of two-node data fusion game} \label{game}
    \begin{center}
    \begin{tabular}{|c|c|c|c|}
      \hline
      \multicolumn{2}{|c|}{Data fusion}& \multicolumn{2}{|c|}{Node b (Player 2)} \\ \cline{3-4}
      \multicolumn{2}{|c|}{Game}& $q$  & $1-q$ \\
      \hline
      Node a     & $p$   & $tr\{\omega(P^a)^{-1}+(1-\omega)(P^b)^{-1}\}$  &  
      $tr\{\omega(P^a)^{-1}+(1-\omega)\chi_b(P^b)^{-1}\}$ \\ \cline{2-4}
      (Player 1) & $1-p$ & $tr\{\omega\chi_a(P^a)^{-1}+(1-\omega)(P^b)^{-1}\}$  & $tr\{\chi_a\chi_b(\omega(P^a)^{-1}+(1-\omega)(P^b)^{-1})\}$  \\
      \hline
    \end{tabular}
    \end{center}
    \end{table}
    
    Anyway, the mean of player $1$ utility could be written as:
    \begin{eqnarray}
        E\{\pi_1\} &=& tr\{pq\{\omega(P^a)^{-1}+(1-\omega)(P^b)^{-1}\}+\\\nonumber
        &&p(1-q)\{\omega(P^a)^{-1}+(1-\omega)\chi_b(P^b)^{-1}\}+\\ \nonumber
        &&(1-p)q\{\omega\chi_a(P^a)^{-1}+(1-\omega)(P^b)^{-1}\}+\\\nonumber
        &&(1-p)(1-q)\chi_a\chi_b\{\omega(P^a)^{-1}+(1-\omega)(P^b)^{-1}\}\}\\ \nonumber
        &=&tr\{\omega\{p+(1-p)\chi_a(q+(1-q)\chi_b)\}(P^a)^{-1}+\\\nonumber
        &&(1-\omega)\{q+(1-q)\chi_b(p+(1-p)\chi_a)\}(P^b)^{-1}\}\\ \nonumber
    \end{eqnarray}
    Matching this equation with Eq.\ref{EqCI} leads to define $(\widehat{P}^c)^{-1}=\omega(\widehat{P}^a)^{-1}+(1-\omega)(\widehat{P}^b)^{-1}$ where $(\widehat{P}^a)^{-1} = \{p+(1-p)\chi_a(q+(1-q)\chi_b)\}(P^a)^{-1}$ and $(\widehat{P}^b)^{-1} = \{q+(1-q)\chi_b(p+(1-p)\chi_a)\}(P^b)^{-1}$. 
    So, we have
    \begin{eqnarray}
        E\{\pi_1\} &=& tr\{\widehat{P}^c)^{-1}\}\\ \nonumber
    \end{eqnarray}
    Finally, finding $\omega$ in such a way that maximize the trace of $(\widehat{P}^c)^{-1}$ leads to maximum utility.
    
    if $P^a-\bar{P}^a\geq0$ and $P^b-\bar{P}^b\geq0$ then $\widehat{P}^a-\bar{P}^a\geq0$ and $\widehat{P}^b-\bar{P}^b\geq0$ because $\{p+(1-p)\chi_a(q+(1-q)\chi_b)\}\leq1$ and $\{q+(1-q)\chi_b(p+(1-p)\chi_a)\}\leq1$ which result in $\widehat{P}^a\geq P^a$ and $\widehat{P}^b\geq P^b$. Now, by considering these assumptions, the consistency could be proved same as \cite[Appendix 12.A]{hall2001multisensor}.
\end{proof}
Let us investigate two particular case. First, $\chi_a=\chi_b=1$ which means that the covariance matrix does not growth on miss-detections, that is to say, there is no difference between detection or miss-detection as players possible actions. So, $(\widehat{P}^a)^{-1}=(P^a)^{-1}$ and $(\widehat{P}^b)^{-1}=(P^b)^{-1}$ which leads to the traditional CI. This make sense because we make no difference between two different action of players. Second, if $\chi_a=\chi_b$ and $p=q$ which means that both nodes have identical probability of detection, then $(\widehat{P}^a)^{-1}=\alpha (P^a)^{-1}$ and $(\widehat{P}^b)^{-1}=\alpha (P^b)^{-1}$ where $\alpha=\{p+p(1-p)\chi_a+(1-p)^2\chi_a^2)\}$. It is reasonable that under similar parameters for nodes, the covariance matrix $(P^a)^{-1}$ weights similar to $(P^b)^{-1}$.

Batch Covariance Intersection (BCI) is an extension of the pairwise CI for fusion of more than two data \cite{julier2009general}. When the mean $m^i$ and variances $P^i$ $(i=1,\ldots,N)$ are known but cross-covariances $P^{ij}$ $(i,j=1,\ldots,N; i\neq j)$ are unknown, the BCI is $m^{BCI}=P^{BCI}\sum_{i=1}^{N}\omega_i (P^i)^{-1}m^i$ and $(P^{BCI})^{-1}=\sum_{i=1}^{N}\omega_i (P^i)^{-1}$ with constraints $0\leq\omega_i\leq1$ and $\sum_{i=1}^N\omega_i=1$. In the problem stated in this paper, BCI is modified as following lemma:
\begin{lemma}
	Let, monitoring nodes $i$ $(i=1,\ldots, N)$, with probability of detection $p_i$ estimate mean and covariance $\{m^i,P^i\}$ for any observed target. The following data fusion (which we call Modified BCI: MBCI) is consistent for any cross covariance $P^{ij}$ $(i,j=1,\ldots,N; i\neq j)$:
	\begin{eqnarray}
	\widehat{m}^{MBCI} &=& \widehat{P}^{MBCI}\sum_{i=1}^{N}\omega_i(\widehat{P}^i)^{-1}m^i\nonumber \\
	(\widehat{P}^{MBCI})^{-1} &=&\sum_{i=1}^{N}\omega_i(\widehat{P}^i)^{-1}
	\end{eqnarray}
	where $(\widehat{P}^i)^{-1} = \{p_i+(1-p_i)\chi_i\prod_{j=1,j\neq i}^{N}(p_j+(1-p_j)\chi_j)\}(P^i)^{-1}$ and $\chi_i^{-1}$ express the growth of covariance in miss-detection intervals. $\omega$ is selected to minimize the trace of $\widehat{P}^i$.
\end{lemma}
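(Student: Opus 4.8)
\section*{Proof proposal}

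The plan is to mirror the proof of the two-node Theorem, promoting the two-player game of Table~\ref{game} to an $N$-player mixed game and then reducing consistency to the batch-CI result. First I would set up the game in which each monitoring node $i$ is a player with two actions, \emph{detect} (chosen with its detection probability $p_i$) and \emph{miss} (chosen with probability $1-p_i$). For a pure strategy profile whose set of missing nodes is $M\subseteq\{1,\dots,N\}$, I would assign the payoff $\mathrm{tr}\{\sum_i\omega_i s_i(P^i)^{-1}\}$, where the scaling of node $i$'s term is $s_i=1$ if $i$ detects and $s_i=\prod_{j\in M}\chi_j$ if $i$ misses. This is the correct $N$-node analogue of Table~\ref{game}: in particular the all-miss profile scales every term by the full product $\prod_j\chi_j$, generalizing the $\chi_a\chi_b$ entry of the two-player table rather than scaling each term only by its own $\chi$.

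The key computational step is to form the expected payoff $E\{\pi\}=\sum_{M}\Pr(M)\,\pi(M)$, the sum running over all $2^N$ missing sets with $\Pr(M)=\prod_{j\in M}(1-p_j)\prod_{j\notin M}p_j$, and to show it collapses to $\mathrm{tr}\{\sum_i\omega_i\lambda_i(P^i)^{-1}\}$ with $\lambda_i=p_i+(1-p_i)\chi_i\prod_{j\neq i}(p_j+(1-p_j)\chi_j)$. I would do this by splitting the sum according to whether $i\in M$: the $i\notin M$ branch contributes $p_i$, because the remaining factors sum to $\prod_{j\neq i}(p_j+(1-p_j))=1$, while the $i\in M$ branch factorizes as $(1-p_i)\chi_i\prod_{j\neq i}(p_j+(1-p_j)\chi_j)$, each node $j\neq i$ contributing $p_j$ when it detects and $(1-p_j)\chi_j$ when it misses. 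Matching the outcome against the BCI form then forces $(\widehat P^i)^{-1}=\lambda_i(P^i)^{-1}$ and $(\widehat P^{MBCI})^{-1}=\sum_i\omega_i(\widehat P^i)^{-1}$, exactly the expressions in the statement, with the weights $\omega_i$ chosen to optimize the trace objective as in the Theorem.

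Finally I would establish consistency. Since $\chi_j^{-1}\ge 1$ gives $0<\chi_j\le 1$, each factor obeys $p_j+(1-p_j)\chi_j\le 1$, so the product is at most one and $\lambda_i\le p_i+(1-p_i)\chi_i\le 1$; as $(P^i)^{-1}\ge 0$ this yields $(\widehat P^i)^{-1}=\lambda_i(P^i)^{-1}\le (P^i)^{-1}$, i.e. $\widehat P^i\ge P^i$. Combined with the assumed consistency $P^i-\bar P^i\ge 0$ this gives $\widehat P^i-\bar P^i\ge 0$, so every modified estimate is itself consistent. The conclusion $\widehat P^{MBCI}-\bar P^{MBCI}\ge 0$ then follows from the consistency of batch covariance intersection under a convex weighting $\{\omega_i\}$ with $0\le\omega_i\le1$ and $\sum_i\omega_i=1$, precisely the pairwise argument of \cite[Appendix 12.A]{hall2001multisensor} extended to the batch case \cite{julier2009general}.

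The step I expect to be the main obstacle is the bookkeeping in the expected-payoff computation: making the $2^N$-term sum factorize cleanly and confirming that the all-miss scaling convention $s_i=\prod_{j\in M}\chi_j$ is the one that reproduces $\lambda_i$, rather than the naive per-node choice $s_i=\chi_i$. Once the correct scaling rule is fixed, the factorization and the reduction to the established batch-CI consistency are routine.
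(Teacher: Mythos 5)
Your proposal is correct and follows exactly the route the paper intends: the paper's own proof of this lemma is the single sentence ``extend the game of Theorem~4.1 to $N$ players,'' and your write-up carries out precisely that extension, correctly identifying the all-miss scaling convention $s_i=\prod_{j\in M}\chi_j$ that reproduces the two-player payoff table, verifying the factorization of the $2^N$-term expected payoff into $\lambda_i=p_i+(1-p_i)\chi_i\prod_{j\neq i}\bigl(p_j+(1-p_j)\chi_j\bigr)$, and reducing consistency to the batch-CI argument just as the theorem reduces it to \cite[Appendix 12.A]{hall2001multisensor}. In fact your proposal supplies the combinatorial bookkeeping that the paper leaves entirely implicit, so it is a strictly more detailed rendering of the same argument rather than a different one.
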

\begin{proof}
	The proof is simply given by extend the game which is defined in theorem 4.1 for $N$ players.
\end{proof}

\section{simulation  result}
\label{section-Simulation}
In this section, the proposed method is utilized when there is one target or unknown number of target in the are of interest.

\subsection{One target}
In this case, there is one target and clutter in the area of interest. Consider the target dynamic as in \cite{sarkka2007rao}:
\begin{eqnarray}\label{ExSystem}
\dot{x} &=& \left[
                \begin{array}{cccc}
                0 & 0 & 1  & 0 \\
                0 & 0 & 0  & 1 \\
                0 & 0 & 0  & a \\
                0 & 0 & -a & 0
                \end{array}
            \right]x+\left[
                \begin{array}{cc}
                0 & 0 \\
                0 & 0 \\
                1 & 0 \\
                0 & 1
                \end{array}
            \right]w.
\end{eqnarray}
where $a$ is a constant parameters in each time step but may change in time; $w$ is zero mean Gaussian distribution with covariance matrix $Q=0.1I$ where $I$ is a $2\times2$ identity matrix. Assume that two monitoring node $a$ and $b$, with different probability of detection $p=0.9$ and $q=0.7$, observe the area of interest by measurement model:
\begin{eqnarray}
y       &=& \left[
                \begin{array}{cccc}
                1 & 0 & 0 & 0 \\
                0 & 1 & 0 & 0
                \end{array}
            \right]x+\left[
                \begin{array}{cc}
                1 & 0 \\
                0 & 1
                \end{array}
            \right]v
\end{eqnarray}
where $v$ is zero mean Gaussian distribution with covariance matrix $R=0.05I$.
Density of clutter is also considered as $0.5$ for both nodes. Monitoring nodes sample data every $dt=0.025$ second. $\chi_a^{-1}$ and $\chi_b^{-1}$ are considered to be $1+dt$. An instantiation of the target trajectory and the corresponding measurements is shown in Fig.\ref{TargetPath}.

\begin{figure}[htb]
\begin{center}
\includegraphics[width=0.5\textwidth]{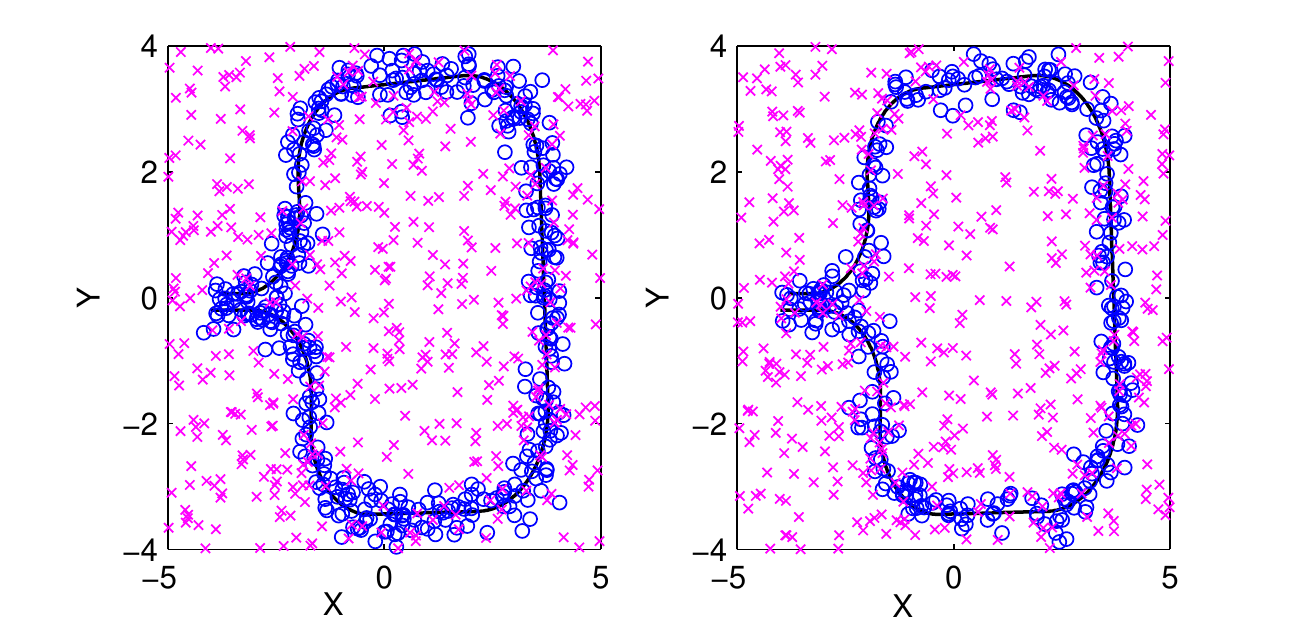}
\caption{An instantiation of the target path and measurements. (Left) node $a$; (Right) node $b$. Solid line (Black): target path; o(Blue): target measurements;$\times$(Pink): clutter measurements.}
\label{TargetPath}
\end{center}
\end{figure}

Monitoring nodes $a$ and $b$ employe RBPF with $20$ particles to estimate states of the target which are shown in Fig.\ref{ParticleFiltering}.
Then, the estimations are transmitted to processing node $c$ where fuses the recieved data by modified CI. The results are shown in Fig.\ref{FusedModCI}.

\begin{figure}[htb]
\begin{center}
\includegraphics[width=0.5\textwidth]{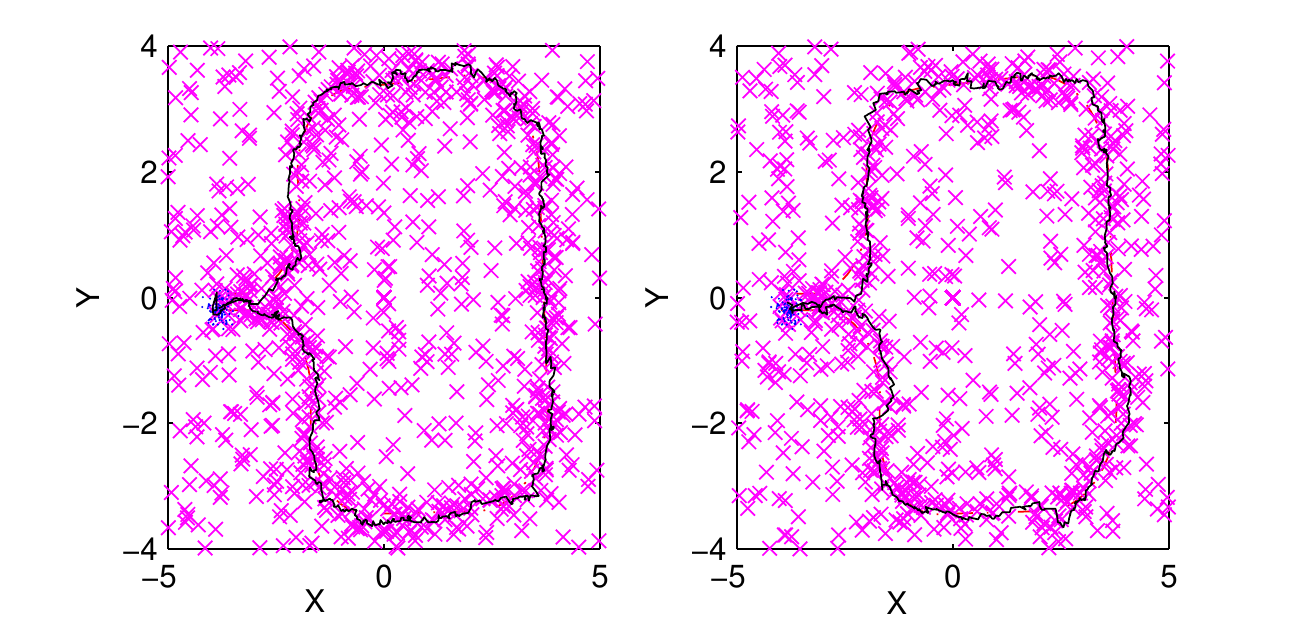}
\caption{Estimation of states by RBPF. (Left) Node a; (Right) Node b. Dashed line (red): target path; Solid line(Black): estimated path; $\times$ (Pink): all measurements.}
\label{ParticleFiltering}
\end{center}
\end{figure}
\begin{figure}[htb]
\begin{center}
\includegraphics[width=0.3\textwidth]{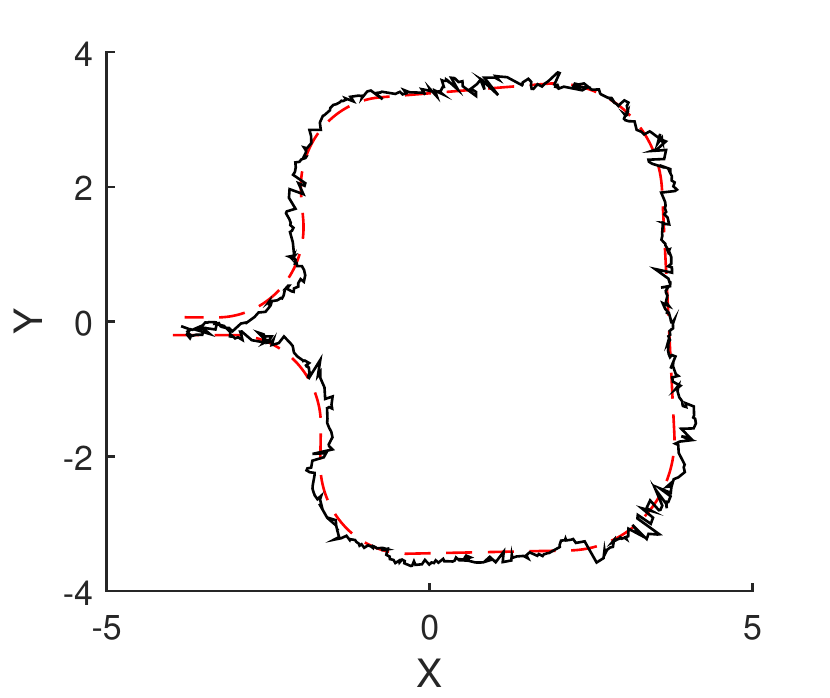}
\caption{Data fusion by modified-CI . Dashed line (red): target path; Solid line(Black): estimated path by data fusion.}
\label{FusedModCI}
\end{center}
\end{figure}

The mean square error (MSE) of estimations as well as the mean of covariance matrix (MNCM) in the terms of 2-norm are calculated for using Kalman filter and RBPF in monitoring nodes for state estimations and CI and modified-CI in processing nodes for data fusion to show the utility of the proposed method. Results are given in table.\ref{table-Comparision}. There is no data association in Kalman filter; so the MSE of Kalman filter is large due to existence of clutter. By considering data association in RBPF, the MSE improved significantly rather than Kalman filter. MNCM is decreased after fusing the estimations of two nodes by CI. Finally, MSE and MNCM for modified-CI are less than other methods. The 2-norm of covariance matrix is shown in Fig.\ref{CovarianceMatrix} for time period between 13 to 15 (sec). It can be seen that the 2-norm of covariance matrix for modified-CI is lower than other method.
\begin{figure}[htb]
\begin{center}
\includegraphics[width=0.38\textwidth]{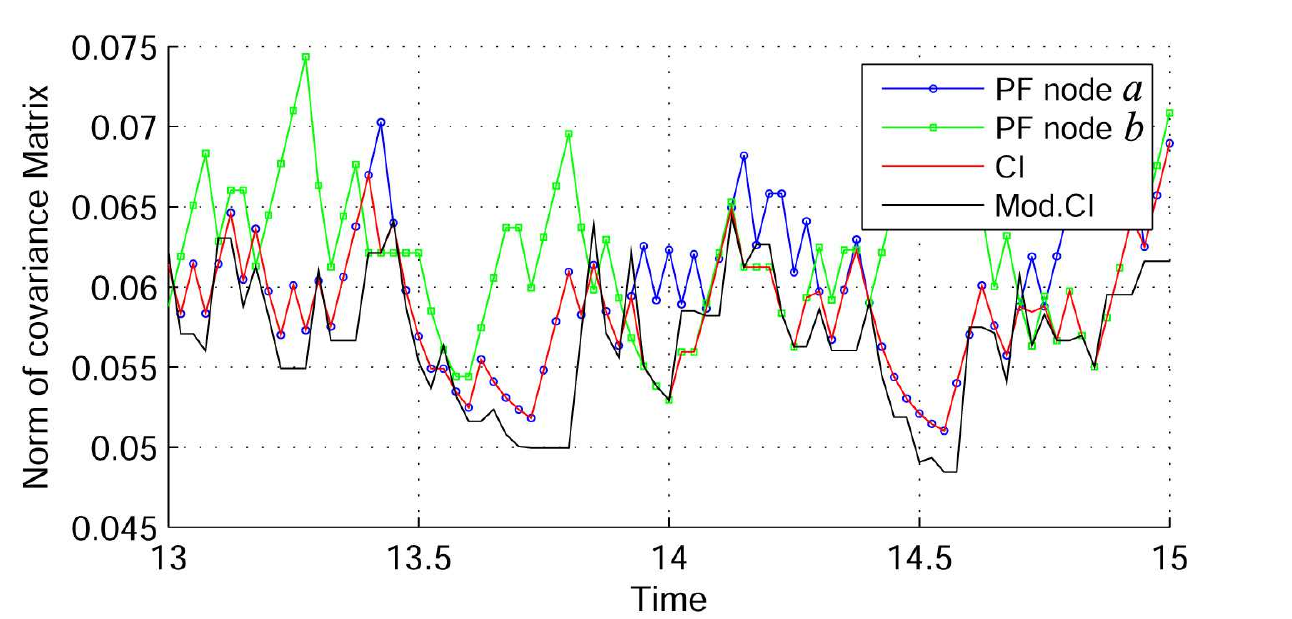}
\caption{2-norm of covariance matrix}
\label{CovarianceMatrix}
\end{center}
\end{figure}

\begin {table}[htb]
\caption {MSE and MNCM for different algorithm in the situation of one target } \label{table-Comparision}
\begin{center}
\begin{tabular}
{|l|l|l|} \hline
\textbf{Filtering Method} & \textbf{MSE} & \textbf{MNCM} \\ \hline
Kalman (node $a$) & 1.447 & - \\ \hline
Kalman (node $b$) & 1.613 & - \\ \hline
RBPF (node $a$) & 0.141 & 0.0632 \\ \hline
RBPF (node $b$) & 0.161 & 0.0666 \\ \hline
CI & 0.141 & 0.0613 \\ \hline
Modified CI & 0.109 & 0.0592 \\ \hline
\end{tabular}
\end{center}
\end {table}

\subsection{Unknown number of targets}
Assume unknown number of targets with dynamic Eq.\ref{ExSystem} exist, born or die in an area of interest which is observed by distributed non-homogeneous monitoring network of Fig.\ref{Network}. The scenario that is simulated here consider one target in the interested area and another $3$ targets appear at $t=1$ and disappear one by one at $t=3,3.5,4$. The network consist of $7$ monitoring nodes which utilize RBMCDA to estimate number of targets and their states, and $3$ processing nodes which fuse received data by modified CI. Each monitoring node send its estimation to the connected neighbor processing nodes in every time step. The detection probability of monitoring nodes are $[0.8, 0.8, 0.9, 0.75, 0.95, 0.9, 0.85]$ which are used in modified CI. Any monitoring nodes may detect some targets and/or clutter in any time step. RBMCDA assume that a target is born in each time step with probability $p_b=0.01$ and it is removed based on the death model of Eq.\ref{death_model} where $t_d$ has gamma distribution with $\alpha=2$ and $\beta=0.5$. An instantiation of this scenario with considering $20$ particles in the filter for a typical monitoring node (node $3$) is shown in Fig.\ref{FigMultiTargetEstimation}. The plot on left shows true target path and all the measurements while the middle plot shows estimated path. The right plot depicts the number of detected targets in time.
\begin{figure}[htb]
\begin{center}
\includegraphics[width=0.7\textwidth]{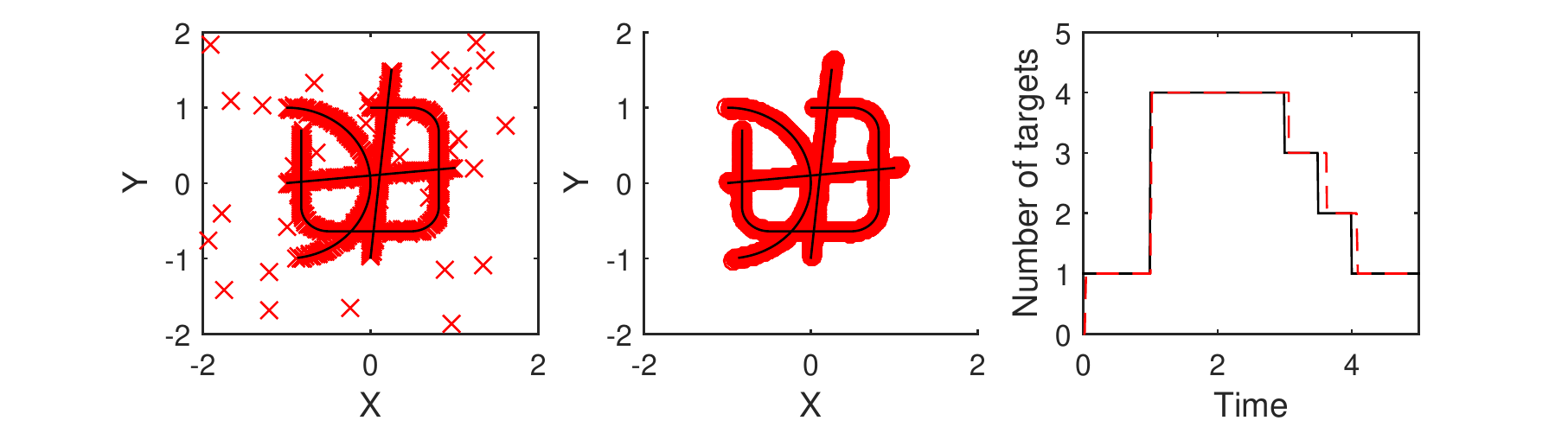}
\caption{(Left) solid lines (black): targets path; $\times$(red): all measurements; (Middle) solid lines (black): targets path; o(red) estimated path (Right) solid line (black): Number of targets, dashed lines (red): estimated number of targets.}
\label{FigMultiTargetEstimation}
\end{center}
\end{figure}

The processing nodes have to associate received data together which is done here based on minimum distance, namely data which have minimum distance from each other and are close together more than a threshold. Modified-CI is conducted for data fusion if more than one data is received for any target. The output results are shown in Fig.\ref{FigMultiTargetModCI} for processing node number 9. The sum of MSE and MNCM for all detected targets are calculated for different nodes which are written in table \ref{table-Comparision-2}. The proposed method has less MSE and MNCM than other algorithms.
\begin{figure}[htb]
\begin{center}
\includegraphics[width=0.3\textwidth]{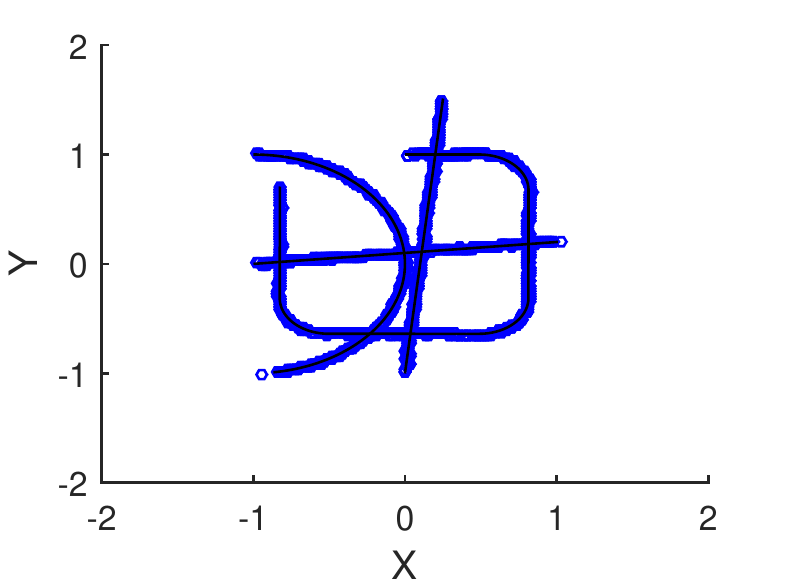}
\caption{Modified CI. solid lines(black): targets path; o(blue): fused estimations.}
\label{FigMultiTargetModCI}
\end{center}
\end{figure}

\begin {table}[htb]
\caption {MSE and MNCM for different algorithm in the situation of unknown number of targets } \label{table-Comparision-2}
\begin{center}
\begin{tabular}
{|l|l|l|l|} \hline
\textbf{Node number} & \textbf{Algorithm} & \textbf{MSE ($\times10^{-3}$)} & \textbf{MNCM} \\ \hline
1 & RBPF & 0.1430 & 0.0578 \\ \hline
2 & RBPF & 0.4716 & 0.0616 \\ \hline
3 & RBPF & 0.1228 & 0.0533 \\ \hline
4 & RBPF & 0.1830 & 0.0580 \\ \hline
5 & RBPF & 0.1010 & 0.0511 \\ \hline
6 & RBPF & 0.1166 & 0.0529 \\ \hline
7 & RBPF & 0.1912 & 0.0556 \\ \hline
8 & CI   & 0.0821 & 0.0508 \\ \hline
9 & CI   & 0.0731 & 0.0560 \\ \hline
10& CI   & 0.0884 & 0.0505 \\ \hline
8 & Modified CI & 0.0784 & 0.0501 \\ \hline
9 & Modified CI & 0.0763 & 0.0523 \\ \hline
10& Modified CI & 0.0799 & 0.0493 \\ \hline
\end{tabular}
\end{center}
\end {table}

\section{ conclusion}
\label{section-Conclusion}
There are nodes with different probability of detection and clutter density in a non-homogeneous monitoring network which may results in different precision for state estimations. The fact of non-homogeneity monitoring nodes is considered in data fusion algorithm to reduce estimation error. To this end, a data fusion mixed game is articulated between two neighbor monitoring nodes which results to a modified CI. The proposed modified CI consider probability of detection of each node into account in fusion process. To show the utility of the method, a scenario for target detection and tracking in a cluttered environment was proposed where multi monitoring nodes with different probability of detection monitor an area of interest. The problem was investigated in two situation: one target and unknown number of targets. The Rao-Blackwellized particle filter was employed to estimate the states of targets in each monitoring node. The particle filter estimations from different monitoring nodes were fused in processing nodes by modified CI. The simulation results showed that mean square error and the 2-norm of covariance matrix for the proposed method are less than Kalman filter, particle filter and fused data with traditional CI.

\bibliographystyle{wileyj}
\bibliography{Citations}
\end{document}